\newtheorem{definition}{Definition} \newtheorem{lemma}{Lemma} \newtheorem{theorem}{Theorem}  \newtheorem{corollary}{Corollary} \newtheorem{Remark}{Remark}
\newtheorem{property}{Property}
\begin{document}

\begin{frontmatter}

\title{A Comprehensive Study of an Online Packet Scheduling Algorithm}

\author{Fei Li}

\address{Department of Computer Science\\
George Mason University\\
Fairfax, Virginia 22030\\
Email: \textsf{lifei@cs.gmu.edu}}

%------------------------------------------------------------------------------

\begin{abstract}
We study the \emph{bounded-delay model} for Qualify-of-Service buffer management. Time is discrete. There is a buffer. Unit-length jobs (also called \emph{packets}) arrive at the buffer over time. Each packet has an integer release time, an integer deadline, and a positive real value. A packet's characteristics are not known to an online algorithm until the packet actually arrives. In each time step, at most one packet can be sent out of the buffer. The objective is to maximize the total value of the packets sent by their respective deadlines in an online manner. An online algorithm's performance is usually measured in terms of \emph{competitive ratio}, when this online algorithm is compared with a clairvoyant algorithm achieving the best total value. In this paper, we study a simple and intuitive online algorithm. We analyze its performance in terms of competitive ratio for the general model and a few important variants.
\end{abstract}

\begin{keyword}
online algorithm \sep competitive analysis \sep buffer management \sep packet scheduling
\end{keyword}

\end{frontmatter}

%------------------------------------------------------------------------------

\section{Model Description}

We consider the \emph{bounded-delay model} introduced in~\cite{Hajek01, KesselmanLMPSS04}. Time is discrete. The $t$-th \emph{(time) step} presents the time interval $(t - 1, \ t]$. There is a buffer and unit-length jobs (also called \emph{packets}) arrive at the buffer over time. Each packet $p$ has an integer release time $r_p \in \mathbb Z^+$, an integer deadline $d_p \in \mathbb Z^+$, and a positive real value $v_p \in \mathbb R^+$. A packet $p$'s characteristics are not known to an online algorithm until $p$ actually arrives at the buffer at time $r_p$. In each step, at most one packet in the buffer can be sent. A packet $p$ is said to be \emph{successfully sent} at time $t$ if $r_p \le t \le d_p$. The objective is to maximize the total value of the packets that are successfully sent in an online manner.

As people have noted, the offline version of this problem can be solved efficiently using the Hungarian algorithm~\cite{Kuhn55} in time $O(n^3)$, where $n$ is the number of packets in the input instance.

In the framework of \emph{competitive analysis} which provides worst-case guarantees, an online algorithm's performance is measured in terms of \emph{competitive ratio}~\cite{BorodinE98}. For a maximization problem, an online algorithm is called \emph{$c$-competitive} if for \emph{any} finite instance, its total value is no less than $1 / c$ times of what an optimal offline algorithm achieves. In competitive analysis, an input instance is allowed to be generated in an adversarial way so as to maximize the competitive ratio. The upper bound of competitive ratio is achieved by some online algorithms. A competitive ratio strictly less than the lower bound cannot be reached by any online algorithm. If an online algorithm has its competitive ratio same as the lower bound, we say that this online algorithm is \emph{optimal}. For the bounded-delay model, the currently best known result is $2 \sqrt{2} - 1 \approx 1.828$~\cite{EnglertW07} and the lower bound is $(1 + \sqrt{5}) / 2 \approx 1.618$~\cite{Hajek01, ChinF03}. If an online algorithm decides which packet to send only based on the contents of its current buffer, and independent of the packets that have already been released and processed, we call it \emph{memoryless}.

In this paper, we study a simple, intuitive memoryless online algorithm called MG (`Modified Greedy'). We analyze MG's performance in terms of competitive ratio for the general bounded-delay model and some important variants. Define a packet $p$'s \emph{slack-time} $s_p$ as the difference between its deadline $d_p$ and release time $r_p$, $s_p = d_p - r_p$. The variants that we consider include:
\begin{itemize}
\item \emph{Agreeable deadline} setting. In an agreeable deadline instance, for any two packets $p$ and $q$ with $r_p \le r_q$, we have $d_p \le d_q$. This variant has been studied in~\cite{LiSS05}.

\item \emph{Anti-agreeable deadline} setting. In an anti-agreeable deadline instance, for any two packets $p$ and $q$ with $r_p \le r_q$, we have $d_p \ge d_q$.

\item \emph{Agreeable value} setting. In an agreeable value instance, for any two packets $p$ and $q$ with $r_p \le r_q$, we have $v_p \le v_q$.

\item \emph{Anti-agreeable value} setting. In an anti-agreeable value instance, for any two packets $p$ and $q$ with $r_p \le r_q$, we have $v_p \ge v_q$.

\item \emph{Agreeable deadline/value} setting. In an agreeable deadline/value instance, for any two packets $p$ and $q$ with $d_p \le d_q$, we have $v_p \le v_q$.

\item \emph{Anti-agreeable deadline/value} setting. In an anti-agreeable deadline/value instance, for any two packets $p$ and $q$ with $d_p \le d_q$, we have $v_p \ge v_q$.

\item \emph{Agreeable slack-time/value} setting. In an agreeable slack-time/value instance, for any two packets $p$ and $q$ with $s_p \le s_q$, we have $v_p \le v_q$.

\item \emph{Anti-agreeable slack-time/value} setting. In an anti-agreeable slack-time/value instance, for any two packets $p$ and $q$ with $s_p \le s_q$, we have $v_p \ge v_q$.
\end{itemize}

Our results are summarized in Table~\ref{tbl:result}. Note that the lower bounds shown in Table~\ref{tbl:result} are the lower bounds of MG's performance but not the lower bounds for any online algorithms.

\begin{table*}
\centering
\begin{tabular}{|c|c|c|p{3cm}|}
\hline
models & upper bounds & lower bounds & notes  \\ \hline \hline

general & $2$ & $2$~\cite{LiSS07} & A detailed analysis of the lower bound is given in this paper. \\ \hline

agreeable deadline & $\phi$~\cite{LiSS05} & $\phi$~\cite{ChinF03} & MG is optimal. \\ \hline
anti-agreeable deadline & $2$ & $2$~\cite{LiSS07} & - \\ \hline

agreeable value & $2$ & $2$~\cite{LiSS07} & - \\ \hline
anti-agreeable value & $1$ & $1$ & MG is optimal. \\ \hline

agreeable deadline/value & $\phi$ & $\phi$~\cite{ChinF03} & MG is optimal. \\ \hline
anti-agreeable deadline/value & $1$ & $1$ & MG is optimal. \\ \hline

agreeable slack-time/value & $\phi$ & $1$ & - \\ \hline
anti-agreeable slack-time/value & $1$ & $1$ & MG is optimal. \\ \hline
\end{tabular}
\caption{Summary of MG's performance for the bounded-delay model and its variants. The results without references are the work presented in this paper. In this table, $\phi = (1 + \sqrt{5}) / 2 \approx 1.618$.}
\label{tbl:result}
\end{table*}

In the following, we present the online algorithm MG in Section~\ref{sec:mg} and analyze its performance in Section~\ref{sec:analysis}.

%------------------------------------------------------------------------------

\section{Algorithm MG}
\label{sec:mg}

The idea of designing MG is motivated by the \emph{greedy algorithm}: In each step, the highest-value pending packet is sent. This algorithm is proved $2$-competitive~\cite{Hajek01, KesselmanLMPSS04}. In one attempt to beat the greedy algorithm in competitiveness, Chin et al.~\cite{ChinCFJST06} proposed an algorithm called EDF$_\alpha$, bearing the idea of sending the earliest-deadline packet with a sufficiently large value (for instance, at least $1 / \alpha$ times of the highest value of a pending packet where $\alpha \ge 1$). Note that EDF$_\alpha$ generalizes the greedy algorithm, which is EDF$_1$. Same as the greedy algorithm, EDF$_\alpha$ is asymptotically not better than $2$-competitive. For EDF$_\alpha$, it is possible that the expiring packet in the algorithm's buffer is the one that an optimal offline algorithm sends and this packet has only a slightly less value than the packet that EDF$_\alpha$ sends.

Recall that a memoryless online algorithm makes its decision only based on the contents of its current buffer. Thus, it is natural to send a packet from a set of packets, all of which are eligible of being sent successfully under the assumption of no future arrivals. We consider \emph{provisional schedules}. A provisional schedule~\cite{ChrobakJST07a, EnglertW07} at time $t$ is a schedule specifying the set of pending packets to be transmitted and for each it specifies the delivery time, assuming no newly arriving packets. An optimal provisional schedule achieves the maximum total value among all the provisional schedules. At the beginning of each step, we calculate an optimal provisional schedule $S$ and the packets in $S$ are arranged in a \emph{canonical} order: increasing order of deadlines, with ties broken in decreasing order of values.

Let $e$ denote the first packet in $S$ and $h$ denote the first highest-value packet in $S$. Motivated by the idea of EDF$_\alpha$, we would like to send a packet with a sufficiently large value compared with $v_h$. At the same time, from the tight example for EDF$_\alpha$, we would like to send a packet to compensate the potential loss due to not sending the earliest-deadline packet $e$. Thus, we send a packet $f$ in the optimal provisional schedule satisfying $v_f \ge v_h / \alpha$ if $f = e$ and $v_f \ge \max\{\beta v_e, \ v_h / \alpha\}$ if $f \neq e$, where $\alpha, \ \beta \ge 1$. In order to guarantee that at least one packet in $S$ can be a candidate packet for $f$, we have to have $\alpha \ge \beta$ since if $v_e < v_h / \alpha$, we should have $v_h \ge v_f \ge \max\{\beta v_e, \ \alpha v_e\} \ge \max\{\beta, \ \alpha\} v_e$. The algorithm MG is described in Algorithm~\ref{alg:mg}.

\begin{algorithm}
\caption{\textsc{MG} ($t, \ 1 \le \beta \le \alpha$)}
\begin{algorithmic}[1]
\STATE Calculate an optimal provisional schedule $S$. All the packets in $S$ are sorted in a canonical order: increasing order of deadlines, with ties broken in decreasing order of values. In $S$, let $e$ denote the first packet; let $h$ denote the first highest-value packet.

\IF{$v_e \ge v_h / \alpha$}

\STATE send $e$;

\ELSE

\STATE send the first packet $f$ satisfying $v_f \ge \max\{v_h / \alpha, \ \beta v_e\}$.

\ENDIF
\end{algorithmic}
\label{alg:mg}
\end{algorithm}

Note that MG generalizes EDF$_\alpha$ (and the greedy algorithm). If $\alpha = 1$ (hence $\beta = 1$ since $\alpha \ge \beta \ge 1$), MG is the greedy algorithm.  If $\beta = 1$, MG is no-worse than EDF$_\alpha$ in competitiveness.

\begin{theorem}
If $\beta = 1$, MG is no-worse than EDF$_\alpha$ in competitiveness.
\end{theorem}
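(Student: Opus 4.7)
My plan is to prove the stronger pointwise statement that, when $\beta = 1$, $\mathrm{MG}(\sigma) \ge \mathrm{EDF}_\alpha(\sigma)$ on every instance $\sigma$, from which the claimed competitive-ratio comparison follows immediately.

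First I would simplify MG's selection rule under $\beta = 1$. The two branches collapse into a single rule: send the first packet $f$, in canonical order of the optimal provisional schedule $S$, with $v_f \ge v_h / \alpha$. In the then-branch this first qualifying packet is exactly $e$; in the else-branch, $v_e < v_h / \alpha$ so $\max\{v_h / \alpha,\ \beta v_e\} = v_h / \alpha$. Next I would observe that $v_h$ actually equals the maximum value $v_H$ over \emph{all} pending packets, because the highest-value pending packet must appear in any optimal provisional schedule: if it did not, swapping it in for the lowest-value entry of $S$ on an overlapping feasibility interval would strictly raise the total value of $S$, contradicting optimality. Consequently MG and $\mathrm{EDF}_\alpha$ use the identical value threshold $v_H / \alpha$; they differ only in that MG restricts its candidate set to $S$, whereas $\mathrm{EDF}_\alpha$ ranges over the entire pending set.

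With thresholds aligned, I would carry out the per-step comparison. Let $f = p_j$ be MG's choice (the first element of $S$, in canonical order, with $v_{p_j} \ge v_H / \alpha$), and let $g$ be $\mathrm{EDF}_\alpha$'s choice (the earliest-deadline pending packet with $v_g \ge v_H / \alpha$). Using optimality of $S$, I would argue that no pending $g \notin S$ can simultaneously satisfy $d_g \le d_f$ and $v_g \ge v_H / \alpha$: substituting such a $g$ for any of the below-threshold packets $p_1, \ldots, p_{j-1}$ in $S$ (each with value $< v_H / \alpha \le v_g$ and scheduled in a slot feasible for $g$ since $d_g \ge d_{p_i}$) would strictly improve $S$. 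This forces $d_g \ge d_f$ and then, by the same swap argument applied at $p_j$'s slot, also $v_f \ge v_g$. I would then lift this per-step domination to the entire execution via a hybrid chain of runs in which $\mathrm{EDF}_\alpha$'s decisions are replaced one step at a time by MG's, showing each replacement is value non-decreasing.

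The main obstacle is controlling the divergence of the two algorithms' buffer states after their first disagreement, since naive induction breaks down: the buffers that MG and $\mathrm{EDF}_\alpha$ see in subsequent steps are no longer identical. I expect to handle this by maintaining a valuation-preserving bijection between MG's residual buffer and $\mathrm{EDF}_\alpha$'s, built up by coupling the traded packets at each disagreement and absorbing new arrivals symmetrically on both sides. Careful bookkeeping for ties in the canonical order (equal deadlines and equal values) will also require attention, since the single-step inequality $v_f \ge v_g$ can hold with equality and only the matching invariant guarantees that MG's advantage does not erode over time.
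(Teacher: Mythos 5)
Your overall strategy---collapse MG's rule under $\beta=1$ to a single threshold test, note that $v_h$ is the maximum value over all pending packets, prove per-step domination by exchange arguments on the optimal provisional schedule, and then couple the two buffers---is essentially the route the paper takes, but two points need repair. First, your swap argument is misstated: in the case you consider you assume $d_g \le d_f$, and the canonical order only gives $d_{p_i} \le d_f$ for $i<j$, so the claimed feasibility ``since $d_g \ge d_{p_i}$'' does not follow; a packet $g$ with a very early deadline cannot in general be substituted into the slots occupied by $p_1,\dots,p_{j-1}$. The exchange fact you actually need runs the other way: if a pending packet $g$ with $v_g \ge v_h/\alpha$ lies outside the optimal provisional schedule $S$, then every slot of $S$ in $[t, d_g]$---in particular slot $t$---must hold a packet of value $\ge v_g$, hence $v_e \ge v_g \ge v_h/\alpha$, MG's first branch fires, and $f=e$ already dominates $g$ in value.

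Second, and more importantly, the coupling of the two buffers after a disagreement, which you correctly identify as the crux, is left as an intention (``I expect to handle this by maintaining a valuation-preserving bijection'') rather than carried out---and this is exactly where the difficulty lives. In the divergent step just described, MG sends the higher-value packet $e$ and retains the lower-value $g$, while EDF$_\alpha$ sends $g$ and retains $e$; so EDF$_\alpha$'s residual buffer \emph{dominates} MG's, and a naive ``MG stays ahead'' invariant is false (if $g$ was about to expire, EDF$_\alpha$ may later collect $v_e$ on top of $v_g$). The paper's device for this is to modify EDF$_\alpha$'s run in place, replacing the packet $p$ it sends by the at-least-as-valuable $e$, so that the two algorithms' buffers remain literally identical at all times and no bijection is ever needed; your plan lacks this (or any substitute) and therefore proves only the per-step inequality, which by itself does not yield the theorem. (To be fair, the paper's own justification that this replacement can only favor EDF$_\alpha$ over its entire remaining run is itself terse, but it is the one concrete idea that closes the gap your proposal leaves open.)
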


\begin{proof}
We inductively prove that ($1$) MG with $\beta = 1$ and EDF$_\alpha$ share the same buffer at any time; (However, we note here that MG's optimal provisional schedule may not be identical to EDF$_\alpha$'s buffer.) and ($2$) in each step, the charged value to MG is no less than the charged value to EDF$_\alpha$.

Assume MG sends $f \neq e$. EDF$_\alpha$ must send $f$ as well since all the packets with values $\ge v_h / \alpha$ must be in MG's optimal provisional schedule. Assume MG sends the $e$-packet and EDF$_\alpha$ sends a packet $p$ not in MG's optimal provisional schedule. If EDF$_\alpha$ does not send $e$ in its schedule, we have $v_e \ge v_p$ and we can use $e$ to replace $p$ for EDF$_\alpha$.
\end{proof}

%------------------------------------------------------------------------------

\section{Analysis}
\label{sec:analysis}

Let OPT denote an optimal offline algorithm and $\mathcal O$ denote the set of packets that OPT sends. Let ADV denote a (modified) adversary. In our proof, we will create ADV and make sure that ADV gains a total value no less than $\sum_{p \in {\mathcal O}} v_p$.

%------------------------------------------------------------------------------

\subsection{The general setting}

\begin{theorem}
MG is $2$-competitive for the bounded-delay model, for any $1 \le \beta \le \alpha \le 2$.
\label{theorem:ub}
\end{theorem}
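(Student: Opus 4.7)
The plan is to prove 2-competitiveness via the modified-adversary framework introduced at the beginning of Section~\ref{sec:analysis}. I would construct an adversary ADV that runs in parallel with MG, with its own buffer $B_A$ that evolves step by step. By careful design of $B_A$, I would maintain two properties: (i) ADV's cumulative transmitted value is at least $\sum_{p \in \mathcal{O}} v_p$, so ADV is at least as strong as OPT, and (ii) in every time step the value of the packet ADV transmits is at most $2 v_f$, where $f$ is MG's packet in the same step. Summing (ii) across all time steps and invoking (i) then yields $\sum_{p \in \mathcal{O}} v_p \le 2 \cdot \text{MG}$, the desired bound.

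The per-step bound $v_o \le 2 v_f$ would follow from a case analysis on MG's decision rule, exploiting the optimality of MG's provisional schedule $S$. In \textbf{Case A} ($v_e \ge v_h/\alpha$, so $f = e$), MG sends a packet with $v_f = v_e \ge v_h/\alpha$, hence $v_h \le \alpha v_f$. In \textbf{Case B} ($v_e < v_h/\alpha$, so $f \ne e$), MG sends some $f$ with $v_f \ge v_h/\alpha$ directly. In either case $\alpha \le 2$ yields $v_h \le 2 v_f$. Thus, if ADV's transmitted packet $o$ is arranged to satisfy $v_o \le v_h$, we are done.

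The main obstacle is precisely guaranteeing the bound $v_o \le v_h$, because $B_A$ may contain packets MG never had or already transmitted, whereas $v_h$ is defined only through the packets currently in MG's buffer. My approach is to specify ADV's update rule for $B_A$ so as to maintain the invariant that every packet in $B_A$ either lies in MG's current buffer or can be exchanged---using an optimality-of-$S$ argument---with a packet currently in $S$ of at least the same value. The bulk of the proof is the bookkeeping: showing that after each step (accounting for new arrivals, packets expiring, MG's transmission of $f$, and ADV's transmission of $o$) the updated buffer $B_A$ remains feasibly schedulable, that its total value still dominates OPT's residual value, and that the invariant $v_o \le v_h$ is preserved. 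The exchange step that replaces OPT's higher-value packets with packets compatible with $S$---while respecting deadlines and the one-transmission-per-step constraint---is the delicate piece and where most of the verification effort will go.
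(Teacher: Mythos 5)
Your framework (a modified adversary ADV kept in lockstep with MG's buffer, with per-step charging) is the same as the paper's, and your inequality $v_h \le \alpha v_f \le 2 v_f$ is one of the two inequalities the paper uses. But the per-step bound you propose --- ADV transmits a single packet $o$ with $v_o \le v_h \le 2 v_f$ --- cannot be maintained simultaneously with the requirement that ADV collects at least OPT's total value. Concretely: suppose MG's buffer holds $e$ with $v_e = 1$, $d_e = t$, and $h$ with $v_h = 2 + \epsilon$, $d_h = t+1$, and $\alpha = 2$. MG sends $h$ and lets $e$ expire; OPT sends $e$ now and $h$ in step $t+1$. In step $t+1$ ADV must transmit $h$, but $h$ is no longer in MG's buffer and MG's provisional schedule may then be empty, so there is nothing of comparable value to exchange $h$ against and no $v_f$ to charge it to. Your invariant (``every packet in $B_A$ lies in MG's buffer or can be exchanged with a packet currently in $S$ of at least the same value'') breaks at exactly this point.

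The missing idea is that in the step where ADV's packet $j$ has an \emph{earlier} deadline than MG's packet $f$, you must let ADV transmit (or be credited for) \emph{two} packets in that single step --- both $j$ and $f$ --- and then delete $f$ from ADV's buffer while re-inserting $j$ as a gift; this is what keeps the buffers identical without ever hurting ADV, and it is what prevents the scenario above. The bound for that step is $v_j + v_f \le 2 v_f$, and the inequality needed is $v_j \le v_f$, which comes not from $v_j \le v_h \le \alpha v_f$ (that route only gives $v_j + v_f \le (1+\alpha) v_f \le 3 v_f$) but from the canonical order: $j$ precedes $f$ in $S$ and MG chose $f$ as the \emph{first} packet meeting the value threshold, so $v_j$ is below that threshold and hence below $v_f$. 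Your inequality $v_o \le v_h \le \alpha v_f$ is the correct one only for the complementary case $d_j > d_f$, where ADV sends one packet and $f$ is swapped out of ADV's buffer for $j$. Without the two-packet charge in the earlier-deadline case, the argument does not close.
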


\begin{proof}
We assume that there exists an adversary called ADV. We modify ADV such that ADV and MG share the same buffer at the beginning of each step. ADV does not have to send every packet in its buffer. In a step, MG sends the packet $f$.

\begin{enumerate}
\item Assume ADV sends the same packet $f$ in this step.

ADV and MG gain the same value.

\item Assume ADV sends a packet $j$ ($\neq f$) with $d_j < d_f$.

We modify ADV by sending both $j$ and $f$ in the current step. We then insert $j$ into ADV's buffer as a gift packet. As assumed, $j$ is in MG's buffer at the beginning of this step. From the canonical order and MG choosing $f$ but not $j$ to send, we have $v_j \le v_f$. Then $v_j + v_f \le 2 v_f$.

\item Assume ADV sends a packet $j$ ($\neq f$) with $d_j > d_f$.

As assumed, $j$ is in MG's buffer at the beginning of this step. No matter $f = e$ or $f \neq e$, we have $v_f \ge v_h / \alpha \ge v_j / \alpha \ge v_j / 2$. Note that $v_f < v_j$ (and $d_f < d_j$) since otherwise, ADV prefers to sending $f$ instead of $j$. We then insert $j$ into ADV's buffer to replace $f$.
\end{enumerate}

At the end of this step, ADV and MG share the same buffer again. The modifications that we make favor the adversary but not MG. In this step, ADV's modified gain is bounded by $2$ times of what MG achieves.
\end{proof}

\begin{theorem}
MG is asymptotically no better than $2$-competitive for the bounded-delay model, with $\alpha = \beta = \phi$.
\label{theorem:lb}
\end{theorem}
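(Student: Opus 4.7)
The plan is to exhibit a family of adversarial instances $\{I_k\}_{k\ge 1}$ on which the ratio $\mathrm{OPT}(I_k)/\mathrm{MG}(I_k)$ tends to $2$ as $k\to\infty$. The starting point is to identify where the upper-bound argument in Theorem~\ref{theorem:ub} becomes tight: in Case~2, the inequality $v_j+v_f\le 2v_f$ is saturated exactly when $v_j$ approaches $v_f$ while $j$ sits in MG's buffer (with $d_j<d_f$) but outside the optimal provisional schedule. The aim is therefore to drive MG into the Case~2 branch at every step while simultaneously pushing $v_{j_t}/v_{f_t}\to 1$.

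First, I would design a single-phase gadget, to be released at time $t$, that instantiates a tight Case~2 configuration: an ``adversary'' packet $j_t$ with deadline $t$; a ``displacer'' packet $p_t$ with deadline $t$ and value slightly above $v_{j_t}$, so that the optimal provisional schedule includes $p_t$ and omits $j_t$; a ``decoy'' packet $f_t$ with deadline $t+1$ and value set to the MG threshold $v_h/\phi$; and a long-deadline ``threshold'' packet $h_t$ of value $v_h$ that sets the entire scale. The values within one gadget are placed in a geometric progression with common ratio $\phi$, so that $\max\{v_h/\phi,\phi v_e\}$ always evaluates to $v_{f_t}$, forcing MG to select $f_t$ instead of its (skipped) earliest packet $p_t$.

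Next, I would chain $k$ such gadgets across times $1,\dots,k$ and verify, by induction on $t$, the following invariant: at the start of phase $t$, MG's buffer contains only leftover high-value packets $h_1,\dots,h_{t-1}$ of strictly smaller value than the fresh $h_t$, so these carry-over packets do not disturb the phase-$t$ decision; MG's optimal provisional schedule includes $p_t$ at slot $t$, $f_t$ at slot $t+1$, and the $h$-packets at deferred slots; consequently MG transmits $f_t$, and both $p_t$ and $j_t$ expire. Meanwhile, OPT can schedule $j_t$ at slot $t$, $f_t$ at slot $t+1$, and defer $h_t$ to an otherwise unused late slot. After the input ends, MG drains its buffer of the accumulated $h_t$'s, as does OPT.

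Finally, summing the resulting geometric series in $\phi$ and invoking $\phi^2=\phi+1$, MG's total is $\sum_{t=1}^k(v_{f_t}+v_{h_t})$ whereas OPT gains in addition $\sum_{t=1}^k v_{j_t}$. Tuning the value gaps so that $v_{j_t}\to v_{f_t}$ makes $\mathrm{OPT}/\mathrm{MG}\to 2$ as $k\to\infty$, which proves the claimed lower bound. The hardest step will be the joint tuning of the several small gap parameters: $v_{j_t}$ must approach $v_{f_t}$ while simultaneously $v_{p_t}$ stays strictly below $v_{h_t}/\phi$ (otherwise MG would not skip the earliest packet) and $v_{f_t}$ stays at or just above the threshold (otherwise MG would send $h_t$ and the Case~2 structure would collapse into a weaker Case~3 bound). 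Verifying that a single parameter schedule can satisfy all these constraints simultaneously across phases, and that the carry-over of $h$-packets never perturbs the inductive invariant, is the central technical content of the argument.
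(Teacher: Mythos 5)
There is a genuine quantitative gap: your gadget cannot produce a ratio approaching $2$. In your accounting MG eventually delivers every $h_t$, so the $h_t$'s appear in both totals and OPT's only surplus is the expired packet $j_t$ (OPT cannot also collect $p_t$, which shares the deadline $t$). But MG's own rule caps how valuable that surplus can be: for MG to skip the earliest packet $e=p_t$ and send $f_t$ you need $v_{f_t}\ge\max\{v_{h_t}/\phi,\ \phi v_{p_t}\}$ with $v_{p_t}<v_{h_t}/\phi$, hence $v_{j_t}<v_{p_t}\le v_{f_t}/\phi=v_{h_t}/\phi^2$. So you cannot ``tune $v_{j_t}\to v_{f_t}$'' while keeping MG in the skip-$e$ branch, and the per-phase ratio is at most $1+\frac{v_{h_t}/\phi^2}{v_{h_t}/\phi+v_{h_t}}=1+\phi^{-3}\approx 1.24$; even your own optimistic tally $\frac{v_j+v_f+v_h}{v_f+v_h}$ with $v_j=v_f=v_h/\phi$ only gives $\frac{2+\phi}{1+\phi}\approx 1.38$. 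Separately, OPT's claimed schedule is infeasible: sending $f_t$ in slot $t+1$ collides with $j_{t+1}$, whose deadline forces it into that same slot, so OPT cannot harvest $j_t$, $f_t$ and $h_t$ for every $t$.

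The missing idea is that the packet MG forfeits must have value comparable to the packet MG sends, and the paper achieves this by targeting a different mechanism than your Case~2 reading. In each step the instance contains a tiny expiring packet $e$ whose only role is to raise the $\beta v_e$ threshold just above the value of a packet $f$ with $v_f\approx v_h$ and a moderately distant deadline; MG therefore skips both $e$ and $f$ and sends the far-deadline top packet $h$, while the nearly-equal-value $f$'s accumulate in the buffer. OPT instead sends the $f$'s during the arrival phase and all the $h$'s afterwards, gaining roughly $v_f+v_h\approx 2v_h$ per step against MG's $v_h$. The real technical work, which your plan has no analogue of, is forcing the accumulated $f$'s to actually expire unsent by MG: if their deadlines are tight they crowd $e$ out of the provisional schedule (breaking the threshold trick), and if they are loose MG sends them after arrivals stop. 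The paper resolves this with a recursive multi-stage construction, with per-stage values growing geometrically by a factor $\phi$ (so the new $h$-packets keep the old $f$-packets below threshold) and stage lengths halving (so the totals telescope to ratio $2$). Without that structure, or some substitute for it, a single chain of phases of the kind you describe cannot get past a constant strictly below $2$.
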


A sketched proof of Theorem~\ref{theorem:lb} has been given in a conference paper~\cite{LiSS07}. We detail the analysis in journal paper.

\begin{proof}
We construct an example to prove Theorem~\ref{theorem:lb}. We use $\infty$ in the deadline field of a packet to show that this packet's deadline is very large. Let $n = 2^k$.  The packets are released in a stage-manner. There are $\log n = k$ stages. The superscript of a packet shows the stage in which it is released.

At the beginning of step $1$, there are $3$ packets in MG's buffer. The adversary has the same buffer. These $3$ packets are $e^1_1 := (1 + \epsilon, \ 2)$, $f^1_1 := (\phi - \epsilon, \ 2^{k + 1} - k)$, and $h^1_1 := (\phi, \ \infty)$. MG sends $h^1_1$, and $e^1_1$ is dropped out of the buffer due to its deadline.

In each of the following ($2^k - k + 1$) time steps, say step $i$, a group of $3$ packets are released: $e^1_i := (1 + \epsilon, \ i + 1)$, $f^1_i := (\phi - \epsilon, \ 2^{k + 1} - k)$, and $h^1_i := (\phi, \ \infty)$. In step $i$, MG sends $h^1_i$ and drops $e^1_i$ due to its deadline. At the end of the ($2^k - k + 1$)-th step, MG's buffer is full of ($2^k - k + 1$) $f^1_i$-packets ($\forall i = 1, \ 2, \ \ldots, \ 2^k - k + 1$). The first stage ends. The length of stage $1$ guarantees that no $f^1_i$ packet, especially packet $f^1_1$, becomes the first packet in the buffer.

At the beginning of step $2^k - k + 1$, the second stage starts. The adversary releases a pair of packets $f^2_1 := (\phi (\phi - \epsilon) - \epsilon, \ 2^{k + 1} - k + 1)$ and $h^2_1 := (\phi^2, \ \infty)$. The newly released packets have later deadlines and are sorted canonically after the packets already in MG's buffer. MG sends $h^2_i$. Stage $2$ contains $2^{k - 1} - k + 2$ steps. The length of stage $2$ guarantees that no packet $f^2_i$ becomes the first packet in the buffer. In each of those $2^{k - 1} - k + 2$ steps, say step $i$, $2$ packets are released $f^2_i := (\phi (\phi - \epsilon) - \epsilon, \ 2^{k + 1} - k + 1)$ and $h^2_i := (\phi^2, \ \infty)$. MG sends $h^2_i$ in step $i$. Stage $2$ is half as long as stage $1$.

We repeat this pattern in each stage, for $k$ stages.  Stage $i + 1$ is half as long as stage $i$.  In each step $j$ of stage $i$, $2$ packets are released, $f^i_j := (\phi (w_{f^{i - 1}_1} - \epsilon), \ 2^{k + 1} - k + i)$ and
$h^i_j := (\phi^i, \ \infty)$. MG sends $h^i_j$ in step $j$. In the last stage, which is step $2^{k + 1}$, the adversary only releases $2$ packets $f^k_1 := (\phi^k, \ 2 n)$ and $h^k_1 := (\phi^{k + 1} + \epsilon, \ \infty)$. MG sends $h^k_1$ and $f^k_1$ is dropped out of the buffer due to its deadline.

For each step in stage $i$, MG only delivers the $h^i$ packets, and eventually, all packet $f^i$ are dropped out of the buffer due to their deadlines. On the contrary, the adversary sends all $f^i$ packets and all $h^i$ packets. A routine calculation shows that the optimal weighted throughput is nearly twice MG's weighted throughput. We remove $\epsilon$ in the following calculation for the sake of clearness.

\begin{eqnarray*}
c & = & \frac{2 \left(\phi^0 \cdot 2^k + \phi^1 \cdot 2^{k - 1} + \ldots + \phi^k \cdot 2^0\right) + \phi^{k + 1}}{\left(\phi^0 \cdot 2^k + \phi^1 \cdot 2^{k - 1} + \ldots + \phi^k \cdot 2^0\right) + \phi^{k + 1}}\\
& = & \frac{2 \left(\phi^0 \cdot 2^k\right) \left(\frac{\phi^0}{2^0} + \frac{\phi^1}{2^1} + \frac{\phi^2}{2^2} + \ldots + \frac{\phi^k}{2^k}\right) + \phi^{k + 1}}{(\phi^0 \cdot 2^k) \left(\frac{\phi^0}{2^0} + \frac{\phi^1}{2^1} + \frac{\phi^2}{2^2} + \ldots + \frac{\phi^k}{2^k}\right) + \phi^{k + 1}}\\
& = & \frac{2^{k + 1} \frac{1 - \left(\frac{\phi}{2}\right)^{k + 1}}{1 - \frac{\phi}{2}} + \phi^{k + 1}}{2^k \frac{1 - \left(\frac{\phi}{2}\right)^{k + 1}}{1 - \frac{\phi}{2}} + \phi^{k + 1}}\\
& = & \frac{2^{k + 1} - \phi^{k + 1} + \phi^{k + 1} - \frac{\phi^{k + 2}}{2}}{2^k - \frac{\phi^{k + 1}}{2} + \phi^{k + 1} - \frac{\phi^{k + 2}}{2}}\\
& = & \frac{2 \left(\frac{2}{\phi}\right)^k - \frac{\phi^2}{2}}{\left(\frac{2}{\phi}\right)^k - \frac{1}{2}}\\
& = & 2.
\end{eqnarray*}
\end{proof}

%------------------------------------------------------------------------------

\subsection{The agreeable deadline setting}

In~\cite{LiSS05}, the authors have shown that MG is $\phi$-competitive for agreeable deadline instances. The lower bound $\phi$ constructed in~\cite{ChinF03} for the general model holds as well for scheduling packets with agreeable deadlines and MG. We list MG's performance in the agreeable deadline setting here for its optimality and significance. We include this variant for comparison with others.

%------------------------------------------------------------------------------

\subsection{The anti-agreeable deadline setting}

Both Theorem~\ref{theorem:ub} and Theorem~\ref{theorem:lb} hold for anti-agreeable deadline instances. Both the upper bound and lower bound for MG are $2$.

%------------------------------------------------------------------------------

\subsection{The agreeable value setting}

Both Theorem~\ref{theorem:ub} and Theorem~\ref{theorem:lb} hold for anti-agreeable deadline instances. Both the upper bound and lower bound for MG are $2$.

%------------------------------------------------------------------------------

\subsection{The anti-agreeable value setting}

\begin{theorem}
MG is $1$-competitive for the anti-agreeable value setting when $\alpha = \infty$. MG is optimal.
\label{theorem:antivalue}
\end{theorem}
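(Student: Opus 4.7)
The plan is to adapt the modified-adversary argument of Theorem~\ref{theorem:ub}, sharpening the per-step accounting so that ADV's gain is \emph{exactly} (rather than at most twice) MG's gain, which yields the $1$-competitive ratio. First observe that with $\alpha = \infty$ the test $v_e \ge v_h / \alpha$ is vacuously satisfied, so MG always transmits $e$, the earliest-deadline packet in the optimal provisional schedule $S$. As in Theorem~\ref{theorem:ub}, I maintain the invariant that at the start of every step, ADV's buffer coincides with MG's, and modify ADV locally to force it to transmit the same packet $e$ that MG transmits.

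The key structural fact, valid for any instance of MG, is the following: if $j$ is pending at time $t$, $j \ne e$, and $j \notin S$, then $v_j \le v_e$. This is because $e$ occupies slot $t$ in the EDF realization of $S$ (since it has the earliest deadline in $S$), so $S \setminus \{e\} \cup \{j\}$ is feasible when $j$ is placed in slot $t$; the optimality of $S$ then forces $v_j \le v_e$. The anti-agreeable value hypothesis supplies the complementary bound I need on packets $j$ with $j \in S$ and on future arrivals: every packet released at a time $\ge r_e$ has value at most $v_e$.

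The case analysis for a step in which ADV transmits $j \ne e$ then proceeds as follows. If $d_j < d_e$, then $j \notin S$ and the structural fact gives $v_j \le v_e$, so I substitute $e$ for $j$ at slot $t$ in ADV's schedule without decreasing its value. If $d_j \ge d_e$ and ADV already transmits $e$ at some time $t' \in [t+1, d_e] \subseteq [t+1, d_j]$, an immediate swap of $e$ and $j$ within ADV's schedule is feasible and value-preserving. If $d_j \ge d_e$ but ADV does not transmit $e$ at all, I use the anti-agreeable value hypothesis to argue that some slot in $(t, d_j]$ of ADV's schedule is occupied by a packet of value at most $v_e$: any packet occupying such a slot with value larger than $v_e$ would have a release time strictly less than $r_e$, hence would have been pending at $t$, and replacing it by $e$ (or including $e$ in the same schedule) would contradict ADV being an offline optimum when combined with the optimality of $S$. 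I then displace such a packet, reschedule $j$ into its slot, and place $e$ at slot $t$; the net change in ADV's value is non-negative.

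The step I expect to be most delicate is the final sub-case above: pinpointing the slot in $(t, d_j]$ of ADV's schedule that can be freed, and verifying that anti-agreeable values give the bound $v \le v_e$ on the displaced packet. This requires tracking how ADV's offline schedule evolves alongside MG's sequence of optimal provisional schedules, and it is where the anti-agreeable value hypothesis does essential work by ruling out offline strategies that profitably sacrifice $e$ in favor of later arrivals. The other sub-cases, as well as preservation of the shared-buffer invariant at the end of the step, follow routinely from the template of Theorem~\ref{theorem:ub} and the structural fact above.
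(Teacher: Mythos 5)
Your high-level strategy coincides with the paper's: with $\alpha = \infty$ the test $v_e \ge v_h/\alpha$ always passes, MG always transmits $e$, and the proof reduces to showing inductively that an adversary sharing MG's buffer can be assumed to transmit $e$ in every step at no cost. Your structural fact (a pending $j \notin S$ satisfies $v_j \le v_e$, by swapping $j$ into slot $t$ in place of $e$ and invoking optimality of $S$) is correct and is a useful explicit version of what the paper leaves implicit. The paper's own argument is a short induction resting on the observation that every packet released at or after $r_e$ has value at most $v_e$, so if any later-released packet is in $\mathcal O$ then $e$ is too, and otherwise OPT sends $e$ anyway.

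However, your write-up leaves the load-bearing steps unproven. First, the sub-case you yourself flag as delicate ($d_j \ge d_e$ and ADV never transmits $e$) is exactly where the theorem's content lives, and the argument you sketch for it --- that some slot of ADV's schedule in $(t, d_j]$ must hold a packet of value at most $v_e$, ``since otherwise replacing it by $e$ would contradict optimality'' --- is an assertion, not a derivation; a packet of value exceeding $v_e$ in such a slot is indeed pending at time $t$, but nothing you have written prevents ADV's entire window $(t, d_j]$ from being filled with such packets, and you never carry out the promised interaction between ADV's schedule and the optimality of $S$. Second, your first sub-case ($d_j < d_e$, so $j \notin S$) silently assumes ADV does not also transmit $e$ at some later time $t' \le d_e$. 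If it does, substituting $e$ for $j$ at slot $t$ duplicates $e$, and because $d_j < d_e$ you cannot in general reschedule $j$ into $e$'s vacated slot $t'$; ADV then loses $v_j$ and the value-preservation claim fails. This configuration --- the adversary sending a tight packet now and $e$ later --- is precisely what drives the factor of $2$ in Theorem~\ref{theorem:ub}, so for a ratio of $1$ it must be explicitly ruled out or repaired using the anti-agreeable value hypothesis (e.g., by exhibiting an unused pending packet of value at least $v_j$ to refill slot $t'$), which your proposal does not do.
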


\begin{proof}
When $\alpha = \infty$, MG sends the earliest-deadline packet $e$ in the optimal provisional schedule in each step. To prove Theorem~\ref{theorem:antivalue}, we only need to inductively show that for each step, an optimal offline algorithm OPT sends $e$ in each step as well. In anti-agreeable value instances, any later released packet has a value $\le v_e$. If any later released packet belongs to $\mathcal O$, so does $e$. If no later released packet belongs to $\mathcal O$, OPT sends $e$ to maximize its total gain. Thus, OPT sends $e$ in each step.
\end{proof}

%------------------------------------------------------------------------------

\subsection{The agreeable deadline/value setting}

The lower bound $\phi$ constructed in~\cite{ChinF03} for the general model holds as well for agreeable deadline/value instances.

\begin{theorem}
MG is $\phi$-competitive for the agreeable deadline/value setting when $\alpha = \beta = \phi^2 \approx 2.618$. MG is optimal.
\label{theorem:newmg}
\end{theorem}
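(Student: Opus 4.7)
The plan is to extend the modified-adversary argument from the proof of Theorem~\ref{theorem:ub}, exploiting the extra rigidity that the agreeable deadline/value property imposes on MG's optimal provisional schedule $S$. Under this property, within $S$ the canonical order (increasing deadline, ties broken by decreasing value) is simultaneously strictly increasing in deadline and weakly increasing in value; moreover any two packets with equal deadline are forced to share the same value. Consequently the earliest-deadline packet $e$ is the minimum-value packet in $S$ and the first highest-value packet $h$ sits at the tail of the canonical sequence, so MG's choice of $f$ is simply a threshold crossing along a sorted list.

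Maintaining the usual invariant that ADV and MG share the same buffer at the beginning of each step, I would split on ADV's action relative to MG's $f$: (a) ADV sends $f$ itself, (b) ADV sends some $j$ with $d_j < d_f$, forcing $v_j \le v_f$ by the agreeable property, and (c) ADV sends some $j$ with $d_j > d_f$, forcing $v_j \ge v_f$. Case (a) is immediate. In case (b), since $j$ is in the buffer and appears strictly before $f$ in canonical order, $v_j$ lies strictly below the threshold $\max\{v_h/\phi^2,\, \phi^2 v_e\}$ that $f$ is the first to meet; combining this with $v_j \ge v_e$ (again by the agreeable property) and the identity $\phi^2 = \phi + 1$, I would derive the modified per-step gain bound $v_j + v_f \le \phi\, v_f$. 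In case (c), however, the naive per-step bound $v_j \le v_h \le \phi^2 v_f$ is too weak, yielding only $\phi^2$-competitiveness.

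The main obstacle is closing the gap in case (c), which I plan to handle by amortized bookkeeping. The key observation is that when MG passes over the high-value packet $h$ in the current step, $h$ remains in MG's buffer (its deadline is the latest in $S$) and will contribute its value to MG's throughput in a subsequent step. I would introduce a potential function that tracks the value of such \emph{retained} high-value packets still owed to MG and charge the excess $v_j - \phi\, v_f$ in case (c) against the future potential drop when MG finally delivers the corresponding packet. The choice $\alpha = \beta = \phi^2$ is tailored so that the two threshold constraints $v_f \ge v_h/\phi^2$ and $v_f \ge \phi^2 v_e$, together with $\phi^2 = \phi + 1$, are tight enough to simultaneously absorb the one-step loss in case (b) and the amortized loss in case (c). Once the upper bound $\phi$ is established in this way, the optimality claim follows from the matching lower bound $\phi$ of Chin and Fung~\cite{ChinF03}, which applies verbatim to the agreeable deadline/value subclass.
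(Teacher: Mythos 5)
Your high-level reading of the structure is right: the hard case is when OPT/ADV sends a packet with a later deadline than MG's $f$, and it can only be handled by deferring charge to future steps. But there are two genuine gaps. First, your case (b) bound $v_j + v_f \le \phi\, v_f$ is false as a per-step inequality. From $j$ preceding $f$ in canonical order you only get $v_j < \max\{v_h/\phi^2,\ \phi^2 v_e\} \le v_f$, and nothing forces $v_j \le v_f/\phi$. Concretely, take $v_e$ just below $v_h/\phi^2$ (so MG does not send $e$), $v_f = v_h$ exactly at the threshold $\phi^2 v_e$, and $v_j$ just below that threshold with $d_e < d_j < d_f$: then $v_j + v_f$ approaches $2 v_f > \phi\, v_f$. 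The paper does not claim a one-step bound here either; it charges OPT $v_q + v_f$ against MG's $v_f$, splits the step into two fractional steps, and pushes the $v_q$ portion (ratio up to $\phi^2$ on its own) into an \emph{open chain} that is only closed, and only then shown to have ratio $\le \phi$, in a later step.

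Second, your case (c) amortization is the crux of the theorem and is left as a sketch. The retained high-value packet you want to credit MG with may itself be passed over in the next step in favor of a still higher-value packet, and so on, so the deferred excess cascades through an unboundedly long sequence of steps. The paper's machinery exists precisely for this: Lemma~\ref{lemma:chain} bounds the aggregate ratio over a chain of $k$ steps satisfying $v_{q_i} \le \alpha v_{p_i}$, $v_{q_i} \le v_{p_{i+1}}$, $v_{q_k} \le v_{p_k}$ by $\frac{(2 - \alpha^{-1})\alpha^k - \alpha}{\alpha^k - 1} \le 2 - \alpha^{-1}$, which equals $\phi$ exactly at $\alpha = \phi^2$ (this, not $\phi^2 = \phi + 1$ alone, is where the parameter choice is used); and Invariants $I_2$ and $I_3$ are needed to keep the chain targets alive under packet arrivals and evictions (via Lemma~\ref{lemma:nodelay}) and to prevent two chains from claiming the same packet. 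A potential function could in principle encode the same bookkeeping, but until you specify it and verify it survives evictions from the provisional schedule and overlapping deferrals, the $\phi$ bound is not established. The final sentence about the lower bound of~\cite{ChinF03} transferring to this subclass is fine and matches the paper.
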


\begin{proof}
We are using a charging scheme to prove Theorem~\ref{theorem:newmg}. Let OPT denote an optimal offline algorithm. Without loss of generality, we assume that OPT only accepts $\mathcal O$-packets and sends them in EDF manner. Let $Q^\text{OPT}$ denote OPT's buffer.

At time $t$, let the optimal provisional schedule be $S$ and we index the buffer slots as $t, \ t + 1, \ \ldots$. The packets in $S$ are sorted in increasing deadline order, with ties broken in decreasing value order and these packets are buffered in slots $t, \ t + 1, \ \ldots, \ t + |S| - 1$ consecutively. The packets not in $S$ are appended at the end of $S$. Let us study the optimal provisional schedule $S$ at first. The packets in $S$ thus are grouped into multiple ($\ge 1$) \emph{batches of packets} $G_1, \ G_2, \ \ldots$, in order of \emph{strictly increasing deadlines}. The packets in the same batch share the same deadline. (Note that $G_1$ is the first batch in $S$.) We have
\begin{Remark}
All the packets in the same batch share the same deadline. For any two batches $G_i$ and $G_j$ with indexes $i < j$, all the packets in $G_i$ have strictly earlier deadlines and strictly lower values than all the packets in $G_j$.
\label{remark:nodelay}
\end{Remark}

We will introduce a charging scheme and this charging scheme may use the following observations.

\begin{Remark}
In the agreeable deadline/value setting, if a packet $p$ is inserted into the optimal provisional schedule, then all the packets with value $> v_p$ are shifted into one buffer slot later since they have strictly larger deadlines. Also, for any two time steps, the relative order among the packets in both MG's optimal provisional schedules is not changed.
\label{remark:sameorder}
\end{Remark}

\begin{lemma}
In the agreeable deadline/value setting, if a packet $p$ is evicted out of MG's optimal provisional schedule at time $t$, then in each step from time $t$ till $p$'s deadline $d_p$, MG's optimal provisional schedules for these steps do not contain any packet with a value $< d_p$.
\label{lemma:nodelay}
\end{lemma}

\begin{proof}
If a packet $p$ is evicted out of MG's optimal provisional schedule at time $t$, then either $d_p < t$ or in each of the buffer slots $t, \ t + 1, \ \ldots, \ d_p$, MG's current optimal provisional schedule at time $t$ buffers one packet with value $> v_p$. From Remark~\ref{remark:nodelay} and the assumption of agreeable deadline/value, $d_p$ should not be larger than those of packets in the batch $G_1$.

\begin{itemize}
\item Assume MG sends the $e$-packet in a step before $d_p$.

Then for those packets arranged in the buffer slots belonging to batch $G_1$, they have their deadlines no smaller than $d_p$ and they are \emph{tight}, that is, they cannot be shifted into later buffer slots and provide buffer slots to accommodate less-value packets with no-later deadlines (see Remark~\ref{remark:sameorder}). For packets in batches $G_2, \ G_3, \ \ldots$, if any, they have strictly larger deadlines than $d_p$ and strictly larger values than $v_p$.

\item Assume MG sends a packet $f \neq e$ in a step before $d_p$.

All the unsent packets in the optimal provisional schedule can be shifted by at most one step to their later steps and the relative order among all these packets keep unchanged (see Remark~\ref{remark:nodelay} and Remark~\ref{remark:sameorder}). Any newly released packets with later deadlines have no smaller values. Any newly released packets with values $< v_p$ are rejected by MG's optimal provisional schedules since all the packets with deadlines $= d_p$ are tight. Thus, for the new optimal provisional schedule generated at the beginning of the next step, Lemma~\ref{lemma:nodelay} still holds.
\end{itemize}
\end{proof}

\begin{lemma}
Consider a \emph{chain of $k$ steps}. In the steps $1, \ 2, \ \ldots, \ k$ (these steps may not be continues), we charge OPT the values $v_{q_1}, \ v_{q_2}, \ \ldots, \ v_{q_k}$ and MG the values $v_{p_1}, \ v_{p_2}, \ \ldots, \ v_{p_k}$, respectively. If for all $i$ with $1 \le i \le k - 1$, we have $v_{q_i} \le \alpha \cdot v_{p_i}$, and if $v_{q_i} \le v_{p_{i + 1}}$ and $v_{q_k} \le v_{p_k}$, then $\sum^k_{i = 1} v_{q_i} \le \frac{1}{\alpha^k - 1} \left(\left(2 - \frac{1}{\alpha}\right) \alpha^k - \alpha\right) \sum^k_{i = 1} v_{p_i}$.
\label{lemma:chain}
\end{lemma}

\begin{proof}
\begin{eqnarray*}
& & \frac{\sum^k_{i = 1} v_{q_i}}{\sum^k_{i = 1} v_{p_i}}\\
& = & \frac{v_{q_1} + v_{q_2} + \cdots + v_{q_k}}{v_{p_1} + v_{p_2} + \cdots + v_{p_k}}\\
& \le & \frac{v_{q_1} + v_{q_2} + \cdots + v_{q_k}}{\frac{v_{q_1}}{\alpha} + \max\{v_{q_1}, \ \frac{v_{q_2}}{\alpha}\} + \cdots + v_{p_k}}\\
& \le & \frac{\frac{v_{q_2}}{\alpha} + v_{q_2} + \cdots + v_{q_k}}{\frac{v_{q_2}}{\alpha^2} + \frac{v_{q_2}}{\alpha} + \cdots + v_{p_k}} \le \cdots\\
& \le & \frac{\frac{v_{q_{k - 1}}}{\alpha^{k - 2}} + \cdots + \frac{v_{q_{k - 1}}}{\alpha} + v_{q_{k - 1}} + v_{q_k}}{\frac{v_{q_{k - 1}}}{\alpha^{k - 1}} + \cdots + \frac{v_{q_{k - 1}}}{\alpha^2} + v_{p_{k - 1}} + v_{p_k}}\\
& \le & \frac{\frac{v_{q_{k - 1}}}{\alpha^{k - 2}} + \cdots + \frac{v_{q_{k - 1}}}{\alpha} + v_{q_{k - 1}} + v_{q_k}}{\frac{v_{q_{k - 1}}}{\alpha^{k - 1}} + \cdots + \frac{v_{q_{k - 1}}}{\alpha^2} + \frac{v_{q_{k - 1}}}{\alpha} + \max\{v_{q_k}, \ v_{q_{k - 1}}\}}\\
& \le & \frac{\frac{v_{q_{k - 1}}}{\alpha^{k - 2}} + \cdots + \frac{v_{q_{k - 1}}}{\alpha} + v_{q_{k - 1}} + v_{q_{k - 1}}}{\frac{v_{q_{k - 1}}}{\alpha^{k - 1}} + \cdots + \frac{v_{q_{k - 1}}}{\alpha^2} + \frac{v_{q_{k - 1}}}{\alpha} + v_{q_{k - 1}}}\\
& = & \frac{\frac{1 - \alpha^{1 - k}}{1 - \alpha^{-1}} + 1}{\frac{1 - \alpha^{-k}}{1 - \alpha^{-1}}}\\
& = & \frac{(2 - \alpha^{-1}) \alpha^k - \alpha}{\alpha^k - 1}.
\end{eqnarray*}
\end{proof}

Note that when $\alpha \ge 1$, $\frac{1}{\alpha^k - 1} \left(\left(2 - \frac{1}{\alpha}\right) \alpha^k - \alpha\right) \le 2 - \frac{1}{\alpha}$. Also, note $\phi + \frac{1}{\phi^2} = 2$, we have
\begin{corollary}
Consider a \emph{chain of $k$ steps}. In the steps $1, \ 2, \ \ldots, \ k$ (these steps may not be continues), we charge OPT the values $v_{q_1}, \ v_{q_2}, \ \ldots, \ v_{q_k}$ and ON the values $v_{p_1}, \ v_{p_2}, \ \ldots, \ v_{p_k}$. If for all $i$ with $1 \le i \le k - 1$, we have $v_{q_i} \le \alpha \cdot v_{p_i}$, and if $v_{q_i} \le v_{p_{i + 1}}$, and $v_{q_k} \le v_{p_k}$, then we have $\sum^k_{i = 1} v_{q_i} \le \phi \sum^k_{i = 1} v_{p_i}$ when $\alpha = \phi^2$.
\label{coro:1}
\end{corollary}

We say that a chain of steps is \emph{open} if we have not charged the values to OPT and MG in these steps. Otherwise, we say that it is \emph{closed}.

\begin{definition}[Canonical Order]
Packets in MG's optimal provisional schedule are order in a \emph{canonical order}: in increasing order of deadlines, with ties broken in decreasing order of values.
\end{definition}

Our charging scheme guarantees the following three invariants:
\begin{enumerate}
\item[$I_1$.] In each step or in a closed chain of a group of steps, the total charged values to OPT are bounded by $\phi$ times of the total charged values to MG. Chains do not share steps.

\item[$I_2$.] For any packet $q$ in OPT's buffer, if $v_q$ has not been charged to OPT in our charging scheme, then $q$ must map uniquely to a packet $p$ in MG's optimal provisional schedule with $v_q \le v_p$ and $d_q \le d_p$. ($p$ may be the packet $q$ itself.)

In the canonical order, for any packet $j$ before $p$ in MG's optimal provisional schedule $S$, if $p$ is not in $S$, then we have $v_j \ge v_q$.

\item[$I_3$.] A packet $p$ in MG's optimal provisional schedule $S$ may correspond to at most one open chain and $v_p$ is no less than the value of the packet OPT sends in the last step of this open chain. If $p$ corresponds to an open chain and is mapped by a packet in OPT's buffer, $p$ is called \emph{overloaded}. If $p$ is overloaded, then any packet before $p$ in $S$ is overloaded as well.
\end{enumerate}

Note that Invariant $I_1$ results in Theorem~\ref{theorem:newmg} automatically.

The charging scheme is described below. We consider packet arrivals and packet deliveries separately.

%------------------------------------------------------------------------------

\subsubsection*{Packet arrivals.}

For any packet $p$ evicted out of MG's optimal provisional schedule $S$ due to accepting a new arrival $p'$, we have $v_{p'} \ge v_p$ and $d_{p'} \ge d_p$ in the agreeable deadline/value setting. After dropping $p$, MG has at least one packet $q$ in $S$ such that $q$ is not mapped by a packet in OPT's buffer, due to Invariant $I_2$. In the canonical order of $S$, we pick up the first packet not in mapping and let it be $q$. $q$ should have a deadline $\ge v_p$ and thus, $v_q \ge v_p$, due to the assumption of agreeable deadline/value setting. Furthermore, any packet in MG's current optimal provisional schedule has a no-less value and no-earlier deadline than $p$. We transfer the open chain mapping to $p$, if any, to $q$. Hence for packet arrivals, all the invariants hold.

%------------------------------------------------------------------------------

\subsubsection*{Packet deliveries.}

In each step, OPT sends the earliest-deadline packet $q$ in its buffer. MG sends either $e$ or $f \neq e$. Remember that we use $S$ denotes MG's optimal provisional schedule and the packets in $S$ are sorted in a canonical order.

%------------------------------------------------------------------------------

\paragraph{Assume MG sends $e$ and OPT sends $q \notin S$ or OPT sends $q = e$}

From Invariant $I_2$, if $q$ has not been charged to OPT, then $v_q \le v_e$. Assume $q$ maps to $p$ in $S$. $v_q \le v_p \le v_e$. We charge OPT $v_q$ and the packets in the open chain mapping to $e$, if any. We close the open chain. The ratio of total charged values of this chain or this single step is bounded by $\phi$ (see Corollary~\ref{coro:1}).

%------------------------------------------------------------------------------

\paragraph{Assume MG sends $e$ and OPT sends $q \in S$ with $q \neq e$}

Due to Invariants $I_2$ and $I_3$, there is no overloaded packets in MG's optimal provisional schedule. Otherwise, OPT sends a packet with an earlier deadline than $d_q$ and less-value than $v_e$ since it sends packets in the EDF order. We start a new open chain from this step mapping to $q$ in MG's optimal provisional schedule. Note that $q$ is not an overloaded packet yet since it is not mapped by any packet in OPT's buffer.

%------------------------------------------------------------------------------

\paragraph{Assume MG sends $f \neq e$ and OPT sends $q \notin S$ or OPT sends $q = e$}

From Algorithm~\ref{alg:mg}, we have $v_f \ge \alpha v_e = \phi^2 v_e$. If $q$ is evicted out of the provisional schedule, we have $v_q \le v_e$ (from Lemma~\ref{lemma:nodelay}). We close the open chain if $e$ belongs to any one. The ratio of total charged values of this chain or this single step is bounded by $\phi$ (see Corollary~\ref{coro:1}).

%------------------------------------------------------------------------------

\paragraph{Assume MG sends $f \neq e$ and OPT sends $q \in S$ with $d_q < d_f$}

\begin{itemize}
\item Assume $f = h$.

We have $v_q < v_h / \alpha = v_f / \alpha = v_f / \phi^2$.

If $q = e$, we close the open chain mapping to $e$, if any. We also charge $v_h$ to OPT in this step. The ratio of total charged values of this chain or this single step is bounded by $\phi$ (see Corollary~\ref{coro:1}).

If $q \neq e$, then no open chains exist since otherwise $e$ is a candidate packet for OPT to send. We charge OPT the value $v_q + v_f$ in this step and MG the value $v_f$. Furthermore, we split this step into two fractional steps: In one fractional step, OPT is charged a value $v_f$ and MG $v_f / \phi$. In this single fractional step, the gain ratio is $\phi$.  In another fractional step, we charge OPT the value $v_q$ and MG the value $v_f / \phi^2 \ge v_q / \phi^2$. This step maps to $q$ in MG's optimal provisional schedule at the end of this step since $e$ with $d_e \ge t$ is not the packet $q$.

\item Assume $f \neq h$.

If $q$ is not in MG's optimal provisional schedule $S$, $q$ must map to a packet $p \in S$ and $v_q \le v_e$. From Algorithm~\ref{alg:mg}, we have $v_f \ge \alpha v_e = \alpha v_q = \phi^2 v_q$. $f$ is not in any open chain (from Invariant $I_3$). We close the open chain, if any, mapping to $p$. We also charge $v_f$ to OPT in this step. The ratio of total charged values of this chain or this single step is bounded by $\phi$ (see Corollary~\ref{coro:1}).

If $q$ is in $S$, then $q$ is not in any open chain, from Invariant~\ref{remark:nodelay}. We charge OPT the value $v_q + v_f$ in this step and MG the value $v_f$. Furthermore, we split this step into two fractional steps: In one fractional step, OPT is charged a value $v_f$ and MG $v_f / \phi$. In this single fractional step, the gain ratio is $\phi$.  In another fractional step, we charge OPT the value $v_q$ and MG the value $v_f / \phi^2 \ge v_q / \phi^2$. This step maps to $q$ in MG's optimal provisional schedule at the end of this step since $e$ with $d_e \ge t$ is not the packet $q$.
\end{itemize}

%------------------------------------------------------------------------------

\paragraph{Assume MG sends $f \neq e$ and OPT sends $q \in S$ with $d_q > d_f$}

Due to Invariants $I_2$ and $I_3$, there is no overloaded packets in MG's optimal provisional schedule. From Algorithm~\ref{alg:mg}, we have $v_q > v_f \ge \alpha v_e = \phi^2 v_e$. We start a new open chain from this step mapping to $q$ in MG's optimal provisional schedule. Note that $q$ is not an overloaded packet yet since it maps no packet in OPT's buffer.
\end{proof}

%------------------------------------------------------------------------------

\subsection{The anti-agreeable deadline/value setting}

Consider the anti-agreeable deadline/value setting. In MG's optimal provisional schedule, for any two packets $p$ and $q$ with $d_p < d_q$, we have $v_p \ge v_q$. Applying the same proof of Theorem~\ref{theorem:antivalue}, we have
\begin{theorem}
MG is $1$-competitive for the anti-agreeable deadline/value setting when $\alpha = \infty$. MG is optimal.
\end{theorem}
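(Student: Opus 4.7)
The plan is to imitate the proof of Theorem~\ref{theorem:antivalue} verbatim, replacing the anti-agreeable value argument by an anti-agreeable deadline/value argument. First I would record the structural fact that distinguishes this setting: in MG's optimal provisional schedule $S$ at any time $t$, the first packet $e$ is simultaneously a first highest-value packet $h$. Indeed, every $p \in S$ with $d_p > d_e$ satisfies $v_p \le v_e$ by the anti-agreeable deadline/value assumption, while any $p \in S$ with $d_p = d_e$ is placed after $e$ by the canonical tie-breaking (decreasing value), so $v_p \le v_e$ as well. With $\alpha = \infty$ the selection rule $v_e \ge v_h / \alpha$ is trivially satisfied, hence MG transmits $e$ at every step.

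Next I would prove by induction on $t$ that an optimal offline schedule $\mathcal O$ may be assumed, without loss of generality, to transmit at step $t$ the same packet $e$ that MG transmits. Suppose $\mathcal O$ transmits some $q \neq e$ at step $t$. Because $e$ is pending at the beginning of step $t$, two situations arise. If $\mathcal O$ delivers $e$ at some later step $t' \le d_e$, swap the roles of $t$ and $t'$; this remains feasible since $r_e \le t$ and $d_q \ge t'$. Otherwise $e$ is not in $\mathcal O$ at all, so the slot occupied by $q$ at step $t$ must have $d_q \ge d_e$ (as $e$ is the earliest-deadline pending packet), whence $v_q \le v_e$ by the combination of anti-agreeable deadline/value and the canonical order. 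Replacing $q$ by $e$ at step $t$ in $\mathcal O$ incurs a loss of at most $v_q$ and a gain of $v_e \ge v_q$, so the total value of $\mathcal O$ does not decrease.

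Iterating the exchange across all steps produces an optimal schedule identical to MG's, which proves $1$-competitiveness; optimality of MG then follows immediately, since no algorithm can beat ratio $1$. The main technical point to keep in mind is the coupling of pending sets along the induction: after the exchange at step $t$, the pending sets of MG and the modified $\mathcal O$ at the start of step $t + 1$ must still coincide. This is preserved because the swap only changes which already-pending packet is charged to step $t$ and leaves all subsequent arrivals, evictions, and feasibility invariants untouched. Verifying that the one-step exchange does not cascade into later infeasibility is the main obstacle, and the anti-agreeable deadline/value property is precisely the hypothesis that keeps the invariant intact across iterations.
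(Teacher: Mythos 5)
Your proposal follows the same route as the paper: the paper's own argument observes that in the anti-agreeable deadline/value setting the earliest-deadline packet of MG's optimal provisional schedule is also a highest-value packet (so with $\alpha = \infty$ MG always sends $e$), and then reuses the induction from the proof of Theorem~\ref{theorem:antivalue} showing that OPT may be assumed to send $e$ in every step. Your explicit exchange argument is just a more detailed rendering of that same induction, so the proposal is correct and essentially identical in approach.
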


%------------------------------------------------------------------------------

\subsection{The agreeable slack-time/value setting}

\begin{lemma}
In the agreeable slack-time/value setting, if a packet $p$ is evicted out of MG's optimal provisional schedule at time $t$, then from time $t$ till $p$'s deadline $d_p$, all the MG's optimal provisional schedules do not contain any packet with a value $< v_p$.
\label{lemma:nodelaymore}
\end{lemma}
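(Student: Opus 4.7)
The plan is to mirror the proof of Lemma~\ref{lemma:nodelay}, substituting the agreeable slack-time/value condition for the agreeable deadline/value condition. The key observation is the contrapositive of the agreement: if $v_q < v_p$ then $s_q < s_p$, equivalently $d_q - r_q < d_p - r_p$, which forces the release-time inequality $r_q > r_p$ whenever a low-value $q$ also has $d_q \ge d_p$.

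First I would establish the property at time $t$ itself. By optimality of MG's provisional schedule and the eviction of $p$ with $d_p \ge t$, every slot in $[t, d_p]$ must already be filled by a packet of value $\ge v_p$, since otherwise $p$ could be inserted for a strict gain. For a packet $q$ residing in a slot $j > d_p$ with $v_q < v_p$, the slack-time/value agreement forces $s_q < s_p$ and hence $r_q > r_p$. I would rule this case out by an exchange argument: shift a packet currently occupying a slot in $[t, d_p]$ to slot $j$, freeing a slot in $[t, d_p]$ for $p$; the structural consequences of $p$'s eviction combined with $q$'s slack-time constraint let this exchange succeed with a strict value increase, contradicting the optimality of the provisional schedule. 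Then I would induct on $t' \in \{t, t+1, \ldots, d_p - 1\}$, showing that if the property holds at time $t'$ it holds at time $t' + 1$. Packets still in the buffer after MG sends either $e$ or $f \neq e$ retain value $\ge v_p$ by hypothesis. For any packet $q$ arriving at time $t' + 1$ with $v_q < v_p$, the agreement gives $s_q < s_p$, which tightly bounds $d_q$; combined with the tightness of the high-value packets filling slots $[t'+1, d_p]$ (the analogue of the $G_1$ tightness in Lemma~\ref{lemma:nodelay}), no such $q$ can enter MG's new optimal provisional schedule.

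The hard part will be the base-case treatment of slots strictly greater than $d_p$. The canonical-order argument of Lemma~\ref{lemma:nodelay} exploits the fact that larger deadlines automatically carry larger values in the agreeable deadline/value setting, and that shortcut is unavailable here. The substitute has to combine the release-time inequality forced by the slack-time/value agreement with a careful exchange argument in the scheduling matroid, handling the subcases where a single-step swap is blocked by tight deadlines and where a multi-hop cascade is required to empty slot $j$ and make room for $p$.
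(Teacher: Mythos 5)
Your base case for slots beyond $d_p$ is where the argument breaks, and it cannot be repaired in the form you propose. The reason $p$ is evicted at time $t$ is precisely that the slots $t, t+1, \ldots, d_p$ are saturated by packets of value $> v_p$ whose deadlines are all at most $d_p$; consequently none of these packets can be shifted to a slot $j > d_p$, and $p$ itself (with deadline $d_p$) cannot occupy slot $j$ either, so the exchange you describe is always, in your own words, ``blocked by tight deadlines'' --- there is no cascade that makes room for $p$. Worse, the sub-claim you are trying to establish there is simply false. Take $p$ with $r_p = 3$, $d_p = 5$, $v_p = 10$; let slots $3,4,5$ be held by packets of values $11,12,13$ with deadlines $3,4,5$ and release time $0$ (slacks $3,4,5 > s_p = 2$, consistent with the agreement); at time $5$ release $q$ with $d_q = 6$, $s_q = 1 < s_p$, $v_q = 1$. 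Then $p$ is evicted at $t = 3$, yet at time $5 \in [t, d_p]$ the optimal provisional schedule places $q$ in slot $6$. What is actually true --- and all that the charging scheme in Theorem~\ref{theorem:newmg} ever uses --- is the statement restricted to the saturated block: in every step from $t$ to $d_p$, the slots up to $d_p$ (in particular the head packet $e$) carry only values $\ge v_p$, so that $v_q \le v_e$ when OPT later sends an evicted $q$. The paper's proof never attempts your slot-$j > d_p$ case; it argues only that the packets occupying slots $t, \ldots, d_p$ are tight, and adds the tie-breaking observation that of two packets with equal deadlines the later-released one has smaller slack, hence smaller value, and is the one preferred for eviction.

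Your inductive step inherits the same overreach (a low-value arrival \emph{can} enter the schedule, just not in a slot $\le d_p$), but once the claim is restricted to the saturated prefix it goes through essentially as you and the paper both sketch: the packets with deadline $\le d_p$ and value $> v_p$ keep the slots $[t'+1, d_p]$ saturated after MG sends $e$ or $f$, so an arrival of value $< v_p$ would have to displace one of them, which optimality of the provisional schedule forbids. Your contrapositive $v_q < v_p \Rightarrow s_q < s_p$, forcing $r_q > r_p$ when $d_q \ge d_p$, is correct and is the right substitute for the deadline/value shortcut of Lemma~\ref{lemma:nodelay}; but it earns its keep in the same-deadline tie-breaking role the paper assigns it, not in the matroid exchange you planned.
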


\begin{proof}
If a packet $p$ is evicted out of MG's optimal provisional schedule at time $t$, then either $d_p < t$ or in each of the buffer slots $t, \ t + 1, \ \ldots, \ d_p$, MG's current optimal provisional schedule at time $t$ buffers one packet with value $> v_p$.

In each step, MG either sends $e$ or $f \neq e$. For time $t$ when a packet $p$ is rejected, those packets unsent by MG but staying in MG's optimal provisional schedule at time $t$ are tight and cannot be shifted into later buffer slots. Note that for any two packets with the same deadline, the earlier released one has a larger slack time, hence, a larger value. Thus, the later released packet is preferred to be evicted if two packets share the same deadline and MG's optimal provisional schedule cannot accommodate both. Lemma~\ref{lemma:nodelaymore} holds.
\end{proof}

Using Lemma~\ref{lemma:nodelaymore}, we apply the proof of Theorem~\ref{theorem:newmg} directly and have
\begin{theorem}
MG is $\phi$-competitive for the agreeable slack-time/value setting when $\alpha = \beta = \phi = (1 + \sqrt{5}) / 2 \approx 1.618$.
\end{theorem}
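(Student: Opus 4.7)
The plan is to reuse, nearly verbatim, the charging-scheme argument of Theorem~\ref{theorem:newmg}, substituting Lemma~\ref{lemma:nodelaymore} for Lemma~\ref{lemma:nodelay} wherever the agreeable-deadline/value hypothesis was invoked. As a first step, I would verify that Remarks~\ref{remark:nodelay} and~\ref{remark:sameorder} together with the three invariants $I_1$, $I_2$, $I_3$ still make sense when ``agreeable deadline/value'' is replaced by ``agreeable slack-time/value'': the canonical ordering of MG's optimal provisional schedule $S$ into batches $G_1, G_2, \ldots$ of strictly increasing deadlines depends only on deadlines, so that layer of structure is unchanged, and the rejection-value lower bound needed for transferring open chains is now supplied by Lemma~\ref{lemma:nodelaymore} instead of by value-monotonicity in the deadline order.

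Next I would walk through the two big cases of the charging scheme. For packet arrivals, when a new packet $p'$ evicts $p$ out of $S$, the original argument chose the first unmapped packet $q$ in $S$ and needed $v_q \ge v_p$; Lemma~\ref{lemma:nodelaymore} delivers exactly this, since $q$ occupies some slot in $[t, d_p]$ and therefore has value at least $v_p$. For packet deliveries I would march through the four subcases from the proof of Theorem~\ref{theorem:newmg}: (i) MG sends $e$ and OPT sends $q\notin S$ or $q=e$; (ii) MG sends $e$ and OPT sends $q\in S$ with $q\ne e$; (iii) MG sends $f\ne e$ and OPT sends $q\notin S$ or $q=e$; (iv) MG sends $f\ne e$ and OPT sends $q\in S$, subdividing on $d_q<d_f$ versus $d_q>d_f$. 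In every subcase where the earlier proof used the clause ``$v_q \le v_e$'' because $q$ had been evicted from $S$, the same bound is now provided directly by Lemma~\ref{lemma:nodelaymore}. Closing each open chain when it terminates and invoking Corollary~\ref{coro:1} gives a chain-wise ratio bounded by $\phi$, and invariant $I_1$ then yields the overall $\phi$-competitiveness.

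The main obstacle I anticipate is the loss of the ``strictly lower values across earlier batches'' clause of Remark~\ref{remark:nodelay}. Under agreeable deadline/value this clause was immediate, but under agreeable slack-time/value packets in different batches may have incomparable release times and thus no a priori value ordering; two packets with distinct deadlines can have the same slack time, for example, and the hypothesis is then silent. What rescues the argument is that Lemma~\ref{lemma:nodelaymore} is phrased in terms of whichever packet was evicted, and therefore directly controls the value profile of later optimal provisional schedules in the slots up to $d_p$, which is precisely what every appeal to Remark~\ref{remark:nodelay} inside the proof of Theorem~\ref{theorem:newmg} actually uses. I would need to audit each such citation and confirm that Lemma~\ref{lemma:nodelaymore}'s guarantee suffices in place of the stronger batch ordering; subcase (iv) with $d_q > d_f$, where a new open chain is opened on the basis that its mapped packet carries value at least $v_q > v_f \ge \alpha v_e$, is the spot I would scrutinize most carefully before declaring the substitution safe.
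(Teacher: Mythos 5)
Your proposal matches the paper's own argument, which is literally to ``apply the proof of Theorem~\ref{theorem:newmg} directly'' with Lemma~\ref{lemma:nodelaymore} substituted for Lemma~\ref{lemma:nodelay}; your case-by-case audit plan is in fact more careful than what the paper writes down. The only point worth flagging is the parameter value: Theorem~\ref{theorem:newmg} is proved with $\alpha = \beta = \phi^2$ while this statement asserts $\alpha = \beta = \phi$, a discrepancy the paper itself leaves unexplained and which you should resolve when carrying out the substitution.
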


%------------------------------------------------------------------------------

\subsection{The anti-agreeable slack-time/value setting}

\begin{property}
Consider the anti-agreeable slack-time/value setting. In MG's optimal provisional schedule, for any two packets $p$ and $q$ with $d_p < d_q$, we have $v_p \ge v_q$.
\label{prop:slack}
\end{property}

Property~\ref{prop:slack} can be proved inductively. Assume at time $t$, Property~\ref{prop:slack} holds. Consider a packet $p$ in the optimal provisional schedule at the end of step $t$. We have $r_p \le t < d_p$. For any released packet $q$ at time $t + 1$, if $d_q < d_p$, we have $s_q = d_q - (t + 1) < d_p - t = s_p$ and $v_q > v_p$. Thus, Property~\ref{prop:slack} holds again. Property~\ref{prop:slack} results in that all the $e$-packets in the optimal provisional schedules are $\mathcal O$-packet. Applying a slightly modified version of the proof of Theorem~\ref{theorem:antivalue}, we have
\begin{theorem}
MG is $1$-competitive for the anti-agreeable slack-time/value setting when $\alpha = \infty$. MG is optimal.
\end{theorem}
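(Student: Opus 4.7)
My plan is to follow the author's directive: use Property~\ref{prop:slack} to show that every $e$-packet MG sends belongs to $\mathcal{O}$, and then adapt the induction from the proof of Theorem~\ref{theorem:antivalue} to conclude $\mathrm{MG} = \mathrm{OPT}$.

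First I would draw two consequences of Property~\ref{prop:slack}. Since $e$ is the first packet in canonical order of $S$ it has the smallest deadline in $S$, so Property~\ref{prop:slack} makes $v_e$ the largest value in $S$. Moreover, for any packet $q$ available at time $t$ with $q \notin S$ we must have $v_q \le v_e$, since otherwise the schedule $S \setminus \{e\} \cup \{q\}$, obtained by placing $q$ in the slot $t$ formerly occupied by $e$, would be feasible (because $r_q \le t \le d_q$) and strictly improve on $S$, contradicting optimality. Combining these, $v_e$ dominates the value of every packet available at time $t$.

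Next I would run the induction on $t$. Assume OPT and MG have delivered identical packets in steps $1, \ldots, t-1$, so they share the same pending buffer at step $t$; MG sends $e$, and I claim OPT must do so as well. Suppose OPT sends some $q \neq e$. If $e \notin \mathcal{O}$, I replace $q$ by $e$ at step $t$ in $\mathcal{O}$: this is feasible, and the value change $v_e - v_q \ge 0$ combined with OPT's optimality forces $v_e = v_q$, so without loss of generality OPT sends $e$. If $e \in \mathcal{O}$ and OPT delivers $e$ at some $t' > t$, I swap $e$ and $q$ in OPT's assignment: this is feasible provided $d_q \ge t'$, and preserves total value. Closing the induction yields $\mathrm{MG} = \mathrm{OPT}$, and hence $1$-competitiveness and optimality (no online algorithm can beat the offline optimum).

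The main obstacle is the remaining sub-case in which $e \in \mathcal{O}$ is delivered by OPT at $t' > t$ but $d_q < t'$, so the naive swap is infeasible. To handle it, the plan is to extract more from the anti-agreeable slack-time/value condition together with Property~\ref{prop:slack}: the inequalities $d_q < t' \le d_e$ and $r_q, r_e \le t$ should pin down the slacks $s_q$ and $s_e$ tightly enough to force $v_q = v_e$, after which we can either drop $q$ and reassign $e$ to slot $t$ with no net loss, or shift OPT's deliveries on $(t, t']$ one step earlier to vacate $t'$ and move $e$ down to $t$. This slack-based rearrangement is where the proof has to do genuine work beyond the value-only comparison used in Theorem~\ref{theorem:antivalue}.
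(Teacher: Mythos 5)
Your overall route is the paper's route: the paper likewise combines Property~\ref{prop:slack} with the exchange induction of Theorem~\ref{theorem:antivalue} (its entire argument is the one sentence that all $e$-packets are $\mathcal{O}$-packets, so you are if anything more explicit than the source). Your first two paragraphs are sound: $v_e$ dominates every pending packet, the case $e\notin\mathcal{O}$ is a value-increasing substitution, and the case $e\in\mathcal{O}$ with $d_q\ge t'$ is a feasible swap. The sub-case you isolate at the end is exactly the one the paper glosses over, and it is the only place where real work is needed.

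However, neither of the two repairs you propose for that sub-case works. First, the slack comparison does not force $v_q=v_e$: from $d_q<t'\le d_e$ and $r_q,r_e\le t$ you cannot order $s_q$ against $s_e$, because $r_q$ may be much smaller than $r_e$, in which case $s_q>s_e$ and the anti-agreeable condition returns only $v_e\ge v_q$, which you already had; dropping $q$ outright then loses $v_q>0$. Second, shifting OPT's deliveries on $(t,t']$ one step earlier places $e$ in slot $t'-1$, not $t$, and may violate the release times of packets released strictly after $t$. The missing ingredient is the \emph{reason} $q$ is outside $S$. Since $e$ is first in canonical order, every packet of $S$ has deadline $\ge d_e>d_q$, so the infeasibility of $S\cup\{q\}$ forces some $d\ge d_e$ for which $S$ contains exactly $d-t+1$ packets of deadline at most $d$, each of value $\ge v_q$, and all of them must be delivered in slots $t,\dots,d$. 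Because OPT spends slot $t$ on $q\notin S$, it can deliver at most $d-t$ of these packets and hence drops some $z\in S$ with $v_z\ge v_q$ and $d_z\ge d_e\ge t'$. Replacing $q$ by $z$ in slot $t$ (value change $v_z-v_q\ge 0$) and then swapping $z$ with $e$ (feasible since $d_z\ge t'$) produces an optimal schedule that sends $e$ at time $t$, closing the induction. Without an argument of this kind --- which rests on the counting structure of the optimal provisional schedule rather than on slack inequalities --- the proof is incomplete.
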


%------------------------------------------------------------------------------

%\section*{References}

\bibliographystyle{elsarticle-num}
\bibliography{complete}

\begin{thebibliography}{10}
\expandafter\ifx\csname url\endcsname\relax
  \def\url#1{\texttt{#1}}\fi
\expandafter\ifx\csname urlprefix\endcsname\relax\def\urlprefix{URL }\fi
\expandafter\ifx\csname href\endcsname\relax
  \def\href#1#2{#2} \def\path#1{#1}\fi

\bibitem{Hajek01}
B.~Hajek, On the competitiveness of online scheduling of unit-length packets
  with hard deadlines in slotted time, in: Proceedings of 2001 Conference on
  Information Sciences and Systems (CISS), 2001, pp. 434--438.

\bibitem{KesselmanLMPSS04}
A.~Kesselman, Z.~Lotker, Y.~Mansour, B.~Patt-Shamir, B.~Schieber,
  M.~Sviridenko, Buffer overflow management in {Q}o{S} switches, SIAM Journal
  on Computing (SICOMP) 33~(3) (2004) 563--583.

\bibitem{Kuhn55}
H.~W. Kuhn, The hungarian method for the assignment problem, Naval Research
  Logistics Quarterly 2 (1955) 83--97.

\bibitem{BorodinE98}
A.~Borodin, R.~El-Yaniv, Online Computation and Competitive Analysis, Cambridge
  University Press, 1998.

\bibitem{EnglertW07}
M.~Englert, M.~Westermann, Considering suppressed packets improves buffer
  management in {Q}o{S} switches, in: Proceedings of the 18th Annual ACM-SIAM
  Symposium on Discrete Algorithms (SODA), 2007, pp. 209--218.

\bibitem{ChinF03}
F.~Y.~L. Chin, S.~P.~Y. Fung, Online scheduling with partial job values: Does
  timesharing or randomization help?, Algorithmica 37~(3) (2003) 149--164.

\bibitem{LiSS05}
F.~Li, J.~Sethuraman, C.~Stein, An optimal online algorithm for packet
  scheduling with agreeable deadlines, in: Proceedings of the 16th Annual
  ACM-SIAM Symposium on Discrete Algorithms (SODA), 2005, pp. 801--802.

\bibitem{LiSS07}
F.~Li, J.~Sethuraman, C.~Stein, Better online buffer management, in:
  Proceedings of the 18th Annual ACM-SIAM Symposium on Discrete Algorithms
  (SODA), 2007, pp. 199--208.

\bibitem{ChinCFJST06}
F.~Y.~L. Chin, M.~Chrobak, S.~P.~Y. Fung, W.~Jawor, J.~Sgall, T.~Tichy, Online
  competitive algorithms for maximizing weighted throughput of unit jobs,
  Journal of Discrete Algorithms 4~(2) (2006) 255--276.

\bibitem{ChrobakJST07a}
M.~Chrobak, W.~Jawor, J.~Sgall, T.~Tichy, Online scheduling of equal-length
  jobs: Randomization and restart help?, SIAM Journal on Computing (SICOMP)
  36~(6) (2007) 1709--1728.

\end{thebibliography}

%------------------------------------------------------------------------------

\end{document}